\documentclass[journal]{IEEEtran}
\usepackage{graphics} 
\usepackage{epsfig} 
\usepackage{times} 
\usepackage{amsmath} 
\usepackage{amsthm}
\usepackage{amssymb}
\usepackage[utf8]{inputenc}
\usepackage{tikz}\usepackage{enumitem}
\usetikzlibrary{positioning, arrows.meta, automata}
\usetikzlibrary{graphs}
\usepackage[colorlinks=false, urlcolor=blue, pdfborder={0 0 0}]{hyperref}
\usepackage{enumerate}
\usepackage{color}
\usepackage[linesnumbered,vlined,ruled]{algorithm2e}
\usepackage{mathrsfs}
\usepackage[most]{tcolorbox}
\tcbuselibrary{breakable,skins}
\usepackage{subfigure}
\usepackage{comment}
\usepackage{cite}
\usepackage{flushend}  
\usepackage{tikz}
\usetikzlibrary{arrows.meta,positioning,fit,backgrounds,calc}

\newtheorem{mythm}{Theorem}
\newtheorem{myprob}{Problem}

\newtheorem{remark}{Remark}

\def \L{{\mathcal{L}}}

\usepackage{tikz} 

\usepackage{url}
\usepackage{pgf}
\usetikzlibrary{positioning, arrows.meta,automata}

\usetikzlibrary{positioning,shapes}
\usetikzlibrary{arrows}

\tcbuselibrary{skins}
\newtcolorbox{resp}[1][]{%
enhanced jigsaw,%
colback=gray!5!white,%
colframe=gray!80!black,%
size=small,%
boxrule=1pt,%
halign title=flush center,%
coltitle=black,%
breakable,%
drop shadow=black!50!white,%
attach boxed title to top left={xshift=1cm,yshift=-\tcboxedtitleheight/2,yshifttext=-\tcboxedtitleheight/2},%
minipage boxed title=3cm,%
boxed title style={%
	colback=white,%
	size=fbox,%
	boxrule=1pt,%
	boxsep=2pt,%
	underlay={%
		\coordinate (dotA) at ($(interior.west) + (-0.5pt,0)$);
		\coordinate (dotB) at ($(interior.east) + (0.5pt,0)$);
		\begin{scope}[gray!80!black]
			\fill (dotA) circle (2pt);
			\fill (dotB) circle (2pt);
		\end{scope}
	}%
},%
#1%
}

\title{Decentralized Decision-Making  for\\ Finite-State Systems over Finite  Alphabets is Undecidable}

\author{Xiang Yin 

\thanks{This work was supported by  the National Natural Science Foundation of China (62173226, 62061136004, 92367203). }
\thanks{X. Yin is with the School of Automation and Intelligent Sensing, Shanghai Jiao Tong University, Shanghai 200240, China, and also with the Key Laboratory of System Control and Information Processing, the Ministry of Education of China, Shanghai 200240, China. {\tt  E-mail: \{yinxiang\}@sjtu.edu.cn}.}
}

\begin{document}

\maketitle

\begin{abstract}
This paper investigates decentralized decision-making for finite-state transition systems, i.e., discrete-event systems, under \emph{finite communication alphabet} constraints. We consider a general decentralized observation framework in which a plant is observed by multiple local agents that transmit symbolic messages over a finite alphabet to a memoryless fusion center. The fusion center then produces a binary decision according to a prescribed fusion rule.
We study the fundamental question of whether there exist local decision maps that enable exact reconstruction of a given regular specification language from decentralized observations. Contrary to classical results that rely on specific monotone fusion rules such as conjunction and disjunction, we show that the problem becomes undecidable even under a severely restricted information architecture: binary local decision alphabets and a fixed exclusive-or (XOR) fusion rule. The proof is based on a reduction from the Thue word problem, a classical undecidable problem in rewriting systems. We further show that decentralized supervisory control, decentralized fault diagnosis, and decentralized fault prognosis are also undecidable under finite communication alphabets.
Our results reveal that existing decidability results fundamentally rely on structural properties of fusion rules, in particular their monotone order-preserving nature. In contrast, non-monotone fusion rules such as XOR break this structure, leading to undecidability even in highly restricted settings.
\end{abstract}

\begin{IEEEkeywords}
Decentralized Control, Discrete-Event Systems, Partial Observation, Undecidability.
\end{IEEEkeywords}

\IEEEpeerreviewmaketitle

\section{Introduction}

\subsection{Background}
Decentralized decision-making is a fundamental problem in systems and control theory, with both theoretical significance and practical relevance \cite{witsenhausen1971separation,tsitsiklis1985complexity,nayyar2013decentralized}. From a practical perspective, many large-scale systems are inherently decentralized, and communication among local sites may be costly  or even infeasible. From a theoretical perspective, decentralized decision-making involves an information structure that is fundamentally different from its centralized counterpart.
In particular, due to the lack of communication among agents, each local agent must reason not only about the plant state, but also about the knowledge available to other agents. Such knowledge, in turn, may depend on other agents' beliefs about its own knowledge, leading to nested information dependencies. This recursive structure often gives rise to an infinite information state space, which constitutes a   fundamental technical challenge  in decentralized decision-making.

In this work, we investigate decentralized decision-making problems for finite-state transition systems, namely discrete-event systems (DES) \cite{wonham2019supervisory,cassandras2021introduction}.
Such systems provide finite abstractions of more complex dynamical systems due to their finite-state nature, while preserving the essential structure of decentralized decision-making. Consequently, DES offer a simple yet expressive framework for studying decentralized decision-making problems, which  have been extensively explored since the seminal work in the late 1980s in the context of decentralized supervisory control \cite{cieslak1988supervisory,lin1988decentralized,lin1990decentralized,rudie1992think,overkamp2000maximal,ricker2007knowledge,komenda2017computation,tripakis2021decentralized,ritsuka2023you,komenda2023supervisory,ritsuka2024uniform}. Subsequently, various extensions have been developed, including decentralized fault diagnosis and decentralized fault prognosis.

\begin{figure}[tp]
	\centering
	\includegraphics[width=0.9\linewidth]{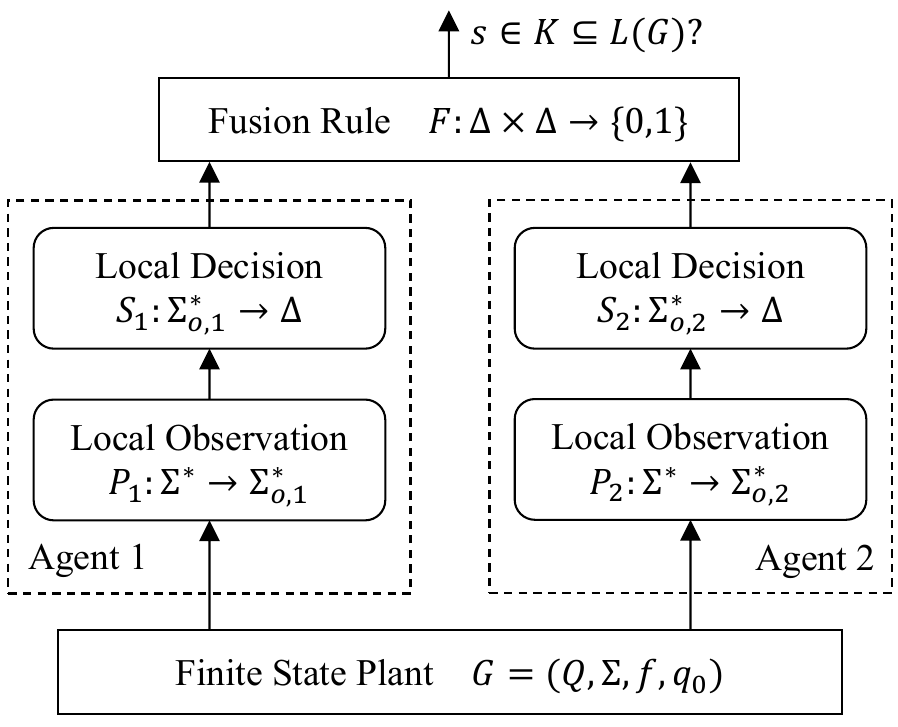}
	\caption{Conceptual illustration of the decentralized decision-making  architecture, where local agents cannot communicate with each other and can only transmit their decisions to a fusion center, which produces the final decision.}
	\label{fig:architecture}
\end{figure}

Despite differences in application contexts, the core structure of these decentralized decision-making problems can be unified under the framework of a decentralized observation problem, as illustrated in Fig.~\ref{fig:architecture}. Specifically, let $G$ denote the finite-state plant, which is observed by a set of local agents (without loss of generality, we consider two agents in this work). Each agent observes the plant through its own locally observable event set $\Sigma_{o,i}$ and generates a local decision symbol from a finite  alphabet $\Delta$. These local decisions are then transmitted to a central fusion center.
The fusion center applies a memoryless fusion rule $F:\Delta \times \Delta \to \{0,1\}$,  
which produces the final binary decision. Depending on the application, this decision may be interpreted as indicating whether an event should be disabled or whether a fault has occurred. In this broader view, the problem reduces to determining whether the observed execution belongs to a given sublanguage of the plant language. The tuple $(\Delta, F)$ is referred to as the \textbf{decentralized architecture} of the  system, which characterizes \emph{what information each local agent is allowed to transmit to the fusion center}, as well as \emph{how the fusion center processes this information to produce the final decision}.
Given a decentralized architecture, the problem is to determine whether there exist local decision maps $S_i:\Sigma_{o,i}^* \to \Delta$ such that the global specification can be enforced by the fusion center.

\subsection{Existing Results and Open Challenges}
In the past three decades, numerous works have been developed for decentralized decision-making under different prespecified information architectures. In this subsection, we focus primarily on discussing results from the decentralized supervisory control literature, since control problems are typically the first to be studied under any newly proposed information architecture; corresponding results in fault diagnosis and fault prognosis are analogous \cite{debouk2000coordinated,wang2007diagnosis,kumar2009decentralized,takai2010inference,moreira2011polynomial,yin2018decentralized,takai2016generalized,takai2026intersection}.
The first and most extensively studied architecture in decentralized supervisory control is the \emph{conjunctive architecture} \cite{cieslak1988supervisory,rudie1992think}, in which each supervisor makes a binary decision, namely ``enable" or ``disable", and the fusion rule is ``disable if at least one supervisor disables''. Subsequently, the \emph{disjunctive architecture} was introduced \cite{prosser1997decision}, where the fusion rule is ``enable if at least one supervisor enables''. In \cite{yoo2002general}, a more general architecture was proposed by allowing different fusion rules for different controllable events. These architectures are commonly referred to as \emph{unconditional architectures}, since the decision alphabet is binary, i.e., $|\Delta|=2$.
For these architectures, necessary and sufficient conditions, typically formulated in terms of variants of coobservability, have been developed for the existence of local decision maps (i.e., supervisors). These conditions are polynomial-time verifiable in the size of the plant, although they are PSPACE-hard in the number of agents \cite{rudie1995computational,rohloff2003deciding}.

In \cite{yoo2004decentralized}, a \emph{conditional architecture} was proposed, in which supervisors are allowed to make conditional decisions such as ``disable if nobody enables'' and ``enable if nobody disables''. The corresponding necessary and sufficient condition is referred to as \emph{conditional coobservability}.
Later, this framework was further generalized to the \emph{inference-based architecture} \cite{takai2008synthesis,takai2024nonexistence}, where multi-level inference among local agents is allowed, and the corresponding notion is called $N$-inference-based observability, where $N$ denotes the depth of inference.
More recently, an \emph{intersection-based architecture} was proposed \cite{yin2016decentralized,hayano2023general,ritsuka2025universal}, in which the local decision is the state estimate from each agent's perspective, i.e., $\Delta = 2^Q$, where $Q$ is the state space. This architecture has been shown to be incomparable with the inference-based architecture.
In essence, these approaches can be interpreted as progressively enlarging the local decision alphabet $\Delta$ while allowing increasingly sophisticated fusion rules.

In all the aforementioned architectures, the verification of the existence of local decision maps relies heavily on natural control-theoretic interpretations. For instance, in the conjunctive architecture, where the fusion rule is ``disable if at least one supervisor disables'', one can verify the satisfaction of the control objective by tracking the ambiguity between the two local agents in a synchronized product construction. Similarly, in the inference-based architecture, one analyzes an $N$-fold synchronized  product of local information states to resolve ambiguity through $N$ levels of inference.
However, these constructions fundamentally rely on specific monotone structures of the fusion rule $F$, which arise naturally when decision symbols are assigned direct reasoning interpretations. Although this structural property leads to decidability results, such interpretation-driven architectures do not fully exploit the expressive power of decentralized decision-making systems.
For example, even in the binary decision setting, the exclusive-or (XOR) fusion rule is typically not considered, since it lacks a direct physical interpretation such as enabling or disabling an event. At first glance, such a fusion rule may appear unnatural from a control-theoretic perspective, since the event must be globally disabled even when both agents agree to enable it.
However, from a purely mathematical perspective, decision alphabets are simply abstract symbols, and the XOR fusion rule indeed induces a fundamentally different computational structure (as shown in this work).

These observations suggest that treating decision alphabets and fusion rules purely as symbolic objects, beyond their physical interpretation, can lead to alternative results in decentralized decision-making. In fact, the following general but fundamental problem remains largely unexplored:
\begin{resp}
\emph{``Given an arbitrary decentralized architecture $(\Delta, F)$, where $\Delta$ is a finite decision alphabet and $F:\Delta \times \Delta \to \{0,1\}$, does there exist an algorithm to determine the existence of local decision maps $S_i:\Sigma_{o,i}^* \to \Delta$ such that the global specification is achieved?"}
\end{resp}

\noindent
This problem is equivalent to the setting where the fusion rule is also unspecified and must be designed jointly with the local decision maps. 
Since $\Delta$ is finite, the set of all possible fusion rules  is also finite. Therefore, one may in principle enumerate all such fusion rules and check, for each candidate rule, whether there exist local decision maps achieving the specification.

\subsection{Main Result}
In this paper, we show that the decentralized decision-making problem under an arbitrary decentralized architecture with a finite decision alphabet is undecidable. In fact, we establish that undecidability already arises in a highly restricted setting, namely when the local decision alphabet is binary and the fusion rule is fixed to the XOR Boolean operator.
This result is somewhat surprising in light of the existing literature, where most works over the past three decades focus on enriching the decision alphabet or designing more sophisticated fusion rules to enhance the expressive power of decentralized architectures. In contrast, we show that even the simple XOR fusion rule, arguably one of the simplest binary operations beyond AND and OR, already leads to undecidability of the overall decision problem.
However, from a structural viewpoint, this is not unexpected. While AND and OR are monotone operators that preserve a natural ordering structure, XOR is inherently non-monotone and does not admit such an order-preserving interpretation. This fundamental difference eliminates the standard monotonicity-based techniques used in decentralized supervisory control.

Technically, our approach formulates a decentralized observation problem in which the objective is to determine whether the system execution belongs to a given regular sublanguage. We show that this problem is undecidable via a reduction from the Thue word problem, which concerns equivalence of two words under a finite set of rewriting rules.
Furthermore, we establish that this decentralized observation problem can be reduced to classical decentralized control, diagnosis, and prognosis problems. This implies the undecidability of these problems under finite communication alphabets.

Notably, it has previously been shown that the notion of joint diagnosability/observability is undecidable \cite{sengupta2002decentralized,tripakis2004undecidable}. That setting corresponds to allowing an \emph{infinite} communication alphabet, where each supervisor may transmit its entire observation sequence to the fusion center. In that case, the source of undecidability is less structurally explicit, as both the communication alphabet and fusion structure are unbounded.
In contrast, our result is strictly stronger in the sense that undecidability persists even when the fusion rule is fixed and  the communication alphabet is binary. This shows that the intrinsic source of undecidability lies in the decentralized information structure induced by independent observations, rather than in the size or richness of the communication alphabet.

\subsection{Organization}
The rest of the paper is organized as follows. In Section~\ref{sec:problem}, we formulate the decentralized observation problem with a finite decision alphabet, which serves as the central problem for which we establish undecidability, and which further implies undecidability for other decentralized decision-making problems.
In Section~\ref{sec:thue-word-problem}, we introduce the well-known Thue word problem, which serves as the source undecidable problem for our reduction.
In Section~\ref{sec:construction}, we present the reduction from the Thue word problem to the decentralized observation problem. The formal correctness proof of the reduction is provided in Section~\ref{sec:proof}.
In Section~\ref{sec:extensions}, we show that other related decentralized supervisory control, fault diagnosis, and fault prognosis problems are also undecidable under binary decision alphabets and the XOR fusion rule.
Finally, Section~\ref{sec:conclusion} concludes the paper.

\section{Decentralized Observation Problem}
\label{sec:problem}
Let $\Sigma$ be a finite event set (alphabet). A string (or word) is a finite sequence of events, and $\Sigma^*$ denotes the set of all finite strings over $\Sigma$, including the empty string $\varepsilon$. The length of a string $s \in \Sigma^*$ is denoted by $|s|$. 
For two strings $s,t \in \Sigma^*$, their concatenation is denoted by $st$, obtained by appending $t$ after $s$. For two languages $L_1, L_2 \subseteq \Sigma^*$, their concatenation is defined as
$L_1 L_2 = \{ st \mid s \in L_1, t \in L_2 \}.$
For a language $L \subseteq \Sigma^*$ and a string $s \in \Sigma^*$, the post-language of $L$ after $s$ is defined as $L/s = \{ t \in \Sigma^* \mid st \in L \}$, 
which represents all suffixes $t$ such that the concatenation $st$ remains in $L$.
The prefix-closure of a language $L \subseteq \Sigma^*$ is defined by
\[
\mathrm{Pref}(L) = \{ s \in \Sigma^* \mid \exists t \in \Sigma^* \text{ such that } st \in L \}.
\]
The exclusive-or (XOR) operator is denoted by $\oplus$, where for any $a,b\in\{0,1\}$, we have 
 $a \oplus b = 1 \Leftrightarrow a \neq b.$

We consider a discrete-event system modeled by a deterministic finite-state automaton
\begin{equation}
    G = (Q, \Sigma, f, q_0),
\end{equation}
where $Q$ is a finite state set, $q_0 \in Q$ is the initial state, and $f: Q \times \Sigma \to Q$
is a (possibly partial) deterministic transition function. For $q,q'\in Q$ and $\sigma\in\Sigma$, $f(q,\sigma)=q'$ means that there exists a transition from $q$ to $q'$ labeled by $\sigma$.
The transition function is extended to strings in the standard way:
(i) $f(q,\varepsilon)=q$; and (ii)
$f(q,s\sigma)=f(f(q,s),\sigma), \forall s\in\Sigma^*,\ \sigma\in\Sigma$.
We also write $q \xrightarrow{s} q'$ if $f(q,s)=q'$. The language generated by $G$ is defined as $\mathcal{L}(G) = \{ s\in\Sigma^* \mid f(q_0,s)! \}$.

Let $\Sigma' \subseteq \Sigma$. The natural projection $P_{\Sigma'}:\Sigma^* \to {\Sigma'}^*$ is defined recursively by: for every $s\in\Sigma^*$ and $\sigma\in\Sigma$, we have
\begin{equation}
P_{\Sigma'}(\varepsilon)=\varepsilon
\text{ and }
    P_{\Sigma'}(s\sigma)=
    \begin{cases}
        P_{\Sigma'}(s)\sigma, & \text{if } \sigma\in\Sigma',\\
        P_{\Sigma'}(s), & \text{if } \sigma\notin\Sigma'.
    \end{cases}
\end{equation}

In the decentralized setting, the plant is observed by multiple \emph{local agents}, each of which can observe only   a subset of events. For the purpose of establishing undecidability, it suffices to consider two agents. For each $i\in\{1,2\}$, let $\Sigma_{o,i}\subseteq\Sigma$ be the set of locally observable events of agent $i$, and let
\[
P_i:\Sigma^* \to \Sigma_{o,i}^*
\]
be the corresponding natural projection.

Each local agent produces a local decision symbol from a finite communication alphabet $\Delta$ based only on its own observation. 
Thus, a local decision map for agent $i$ is a function
\begin{equation}
    S_i:P_i(\L(G))\to\Delta.
\end{equation}
Equivalently, one may define $S_i$ on the whole set $\Sigma_{o,i}^*$, since its values outside $P_i(\mathcal L(G))$ are irrelevant. 
The two local decisions are then sent to a central fusion site. 
The fusion site is memoryless and is described by a fixed function
\begin{equation}
    F:\Delta\times\Delta\to\{0,1\}.
\end{equation}

The objective of the fusion center is to determine whether a generated string belongs to a given target regular language $K \subseteq \mathcal{L}(G)$. The decentralized observation problem with finite communication alphabet is defined as follows.

\begin{myprob}[\bf Decentralized Observation Problem with Finite Alphabet]
\label{prob:decentralized-observation}
Given a deterministic finite-state automaton $G=(Q,\Sigma,f,q_0)$, locally observable event sets $\Sigma_{o,1},\Sigma_{o,2}\subseteq\Sigma$, a regular language $K \subseteq \mathcal{L}(G)$, a finite communication alphabet $\Delta$, and a fusion rule
\begin{equation}
F:\Delta \times \Delta \to \{0,1\},    
\end{equation}
decide whether there exist local decision maps
\begin{equation}
    S_i: P_i(\mathcal{L}(G)) \to \Delta,\quad i=1,2,
\end{equation}
such that for all $s \in \mathcal{L}(G)$, we have
\begin{equation}
s \in K
\quad \Longleftrightarrow \quad
F\big(S_1(P_1(s)), S_2(P_2(s))\big)=1.
\label{eq:exact-decentralized-decision}
\end{equation}
\end{myprob}

In other words, the problem asks whether the characteristic function of $K$ over $\mathcal{L}(G)$ can be exactly reconstructed using only finite-symbol local communication based on partial observations. Importantly, the local decision maps are arbitrary set-theoretic functions; no finite-state or computability restrictions are imposed beyond dependence on local observations. 
As will be further discussed in Section~\ref{sec:extensions}, the decentralized observation problem plays a fundamental role in a wide range of decentralized decision-making problems, including decentralized supervisory control and decentralized fault diagnosis.
 
For simplicity and without loss of generality, we assume that the specification language $K$ is given in state-based form. Since $K$ is regular, one can always construct a product  of $G$ with an automaton recognizing $K$. Thus, we may assume  the existence of a set of marked states  $Q_K \subseteq Q$ accepting $K\subseteq \mathcal{L}(G)$, i.e., 
 $\forall s \in \mathcal{L}(G)$, $s \in K$ iff $f(q_0,s) \in Q_K.$ 

The main objective of this paper is to show that Problem~\ref{prob:decentralized-observation} is undecidable in general. Moreover, this undecidability holds even under a highly restricted architecture: \textbf{binary communication alphabet and a fixed XOR fusion rule}.

\begin{mythm}
\label{thm:main-undecidability}
The decentralized observation problem with finite communication alphabet is undecidable. In particular, there is no algorithm that, given $G$, $\Sigma_{o,1}$, $\Sigma_{o,2}$, and $K$, decides whether there exist local decision maps
\[
S_1: P_1(\mathcal{L}(G)) \to \{0,1\}\quad \text{ and }\quad 
S_2: P_2(\mathcal{L}(G)) \to \{0,1\},
\]
such that for all $s \in \mathcal{L}(G)$,
\begin{equation}
s \in K
\quad \Longleftrightarrow \quad
S_1(P_1(s)) \oplus S_2(P_2(s)) = 1,
\label{eq:xor-decision}
\end{equation}
where $\oplus$ denotes exclusive-or.
Consequently, the problem remains undecidable even with two agents, binary communication, and a fixed memoryless XOR fusion rule.
\end{mythm}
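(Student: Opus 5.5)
The plan is a reduction from the (symmetric) Thue word problem, the source undecidable problem of Section~\ref{sec:thue-word-problem}. The starting observation is that the XOR condition \eqref{eq:xor-decision} is a system of linear equations over $\mathrm{GF}(2)$.

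Introduce one unknown $x_\alpha=S_1(\alpha)$ for every $\alpha\in P_1(\mathcal L(G))$, and one unknown $y_\beta=S_2(\beta)$ for every $\beta\in P_2(\mathcal L(G))$. Each $s\in\mathcal L(G)$ then imposes the equation $x_{P_1(s)}\oplus y_{P_2(s)}=\chi_K(s)$, where $\chi_K$ is the indicator of $K$. So $s$ can be read as an edge of a bipartite graph $\mathcal B$ between agent-1 nodes and agent-2 nodes, labelled with the parity $\chi_K(s)$. By the standard $2$-colouring argument, local maps exist iff $\mathcal B$ contains no cycle of odd total parity. Because the maps are arbitrary functions, no finiteness issue arises.

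\textbf{Construction.} Let a Thue system $R$ over an alphabet $A$ and words $u,v$ be given, and close $R$ under reversal. Take two disjoint copies of $A$: unprimed letters observed only by agent~1 and primed letters observed only by agent~2. Add end markers $\$\in\Sigma_{o,1}$ and $\#\in\Sigma_{o,2}$, and an unobservable event $\tau$.
\begin{itemize}
\item \emph{Identity strings.} Let $G$ generate all strings $a_1a_1'\cdots a_na_n'\$\#$. These project to $(w\$,w'\#)$ and give parity-$0$ edges identifying the two agents' copies of the word $w$.
\item \emph{Rewrite strings.} For each rule $(l,r)\in R$, let $G$ also generate the strings $(x\text{ interleaved with }x')\,l\,r'\,(y\text{ interleaved with }y')\$\#$. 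These project to $(xly\$,\,x'r'y'\#)$, i.e.\ a one-step rewrite, again with parity $0$.
\end{itemize}
Both families are regular, because $R$ is finite and the interleaving is letter-by-letter. Finally, add a single string $\tau u v'\$\#$ (suitably ordered) and let $K=\{\tau u v'\$\#\}$. This gives the only parity-$1$ edge, between $u\$$ and $v'\#$. The marked-state form of $K$ is immediate from the construction.

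\textbf{Correctness.} Among marked nodes, a path alternates agent-1 and agent-2 nodes, and each step preserves the Thue class or applies one rule. Therefore $u\$$ and $v'\#$ are connected by parity-$0$ edges iff $u\leftrightarrow_R^* v$. Hence an odd cycle exists iff $u\leftrightarrow_R^* v$, so the XOR maps exist iff $u$ and $v$ are \emph{not} Thue-equivalent. Since the complement of an undecidable problem is undecidable, Theorem~\ref{thm:main-undecidability} follows.

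\textbf{Main obstacle: prefix-closure.} $\mathcal L(G)$ is prefix-closed, so every prefix of the intended strings also contributes a parity-$0$ edge, and these might merge components and create spurious odd cycles. I would handle this by case analysis on which markers a prefix contains.
\begin{itemize}
\item \emph{Prefixes containing neither marker.} These join only unmarked nodes to unmarked nodes. Together they form a subgraph disjoint from the marked one that carries only even edges, so it is harmless.
\item \emph{Prefixes ending exactly after $\$$.} These give edges $(w\$,\,z')$, where $z'$ is the unmarked agent-2 projection. The key check is that the agent-1 neighbours of an unmarked $z'$ are exactly the $w\$$ one-step related to $z$, which are exactly the neighbours of $z'\#$. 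Such edges therefore add no connectivity beyond the Thue-step relation already present.
\item \emph{Prefixes of the special string $\tau u v'\$\#$.} I would order this string as $\tau$, then $u$, then $v'$, then $\$\#$, and verify by the same argument that its prefixes attach only to nodes already connected in the intended way.
\end{itemize}
Getting this bookkeeping right is the delicate step. If it fails, I would fall back on putting a distinct observable start marker per string family, so that prefix nodes form separate even-only components.
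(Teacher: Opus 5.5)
There is a genuine gap at the step you flagged as delicate, and it breaks the construction as proposed. Your two end markers are agent-specific ($\$$ seen only by agent~1, $\#$ only by agent~2), so prefix-closure creates a spurious odd cycle for \emph{every} Thue instance. The prefix $\tau u v'\$$ of your unique marked string lies in $\mathcal{L}(G)\setminus K$, which forces $S_1(u\$)=S_2(v')$. The identity prefix $\mathrm{pair}(v)\$$ and the identity string $\mathrm{pair}(v)\$\#$ force $S_2(v')=S_1(v\$)=S_2(v'\#)$. Hence $S_1(u\$)=S_2(v'\#)$, contradicting $S_1(u\$)\neq S_2(v'\#)$, which $\tau u v'\$\#\in K$ requires. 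The constructed instance is therefore infeasible whether or not $u$ and $v$ are Thue-equivalent, and the reduction proves nothing.

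Your own ``key check'' (an unmarked $z'$ has the same agent-1 neighbours as $z'\#$) is exactly why a parity-$0$ edge into $v'$ acts like a parity-$0$ edge into $v'\#$. Reordering the special string does not help. The identity family puts $w$, $w'$, $w\$$, $w'\#$ into one even component for every $w$, so whichever marker comes last, the prefix just before it links the $u$-component to the $v$-component with parity $0$. Ending with an unobservable event is contradictory outright. The fallback with family-specific start markers does not address this either: such markers separate families, not prefixes from complete strings, and the offending edge comes from a prefix of the marked string itself. An observable marker on the marked string would also change the observations $u\$$, $v'\#$, which must coincide with identity/rewrite observations for the Thue constraints to reach the parity-$1$ edge.

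The missing idea is the paper's: use a \emph{single} termination event $\#$ observable by \emph{both} agents and placed last in every branch. Then for every $s\in\mathcal{L}(G)$, $P_1(s)$ ends in $\#$ iff $P_2(s)$ does iff $s$ is complete. In your bipartite-graph picture there are then no edges between terminal and non-terminal observation nodes, and the non-terminal part, carrying only parity-$0$ edges, is satisfied by setting everything to $0$. With this change your $\mathrm{GF}(2)$ odd-cycle argument, restricted to complete strings, is correct. It is essentially a reformulation of the paper's proof: the common valuation $C(w):=S_1(w^{(1)}\#)=S_2(w^{(2)}\#)$ is invariant under one-step rewrites and must separate $\alpha$ from $\beta$. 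Closing $R$ under reversal is harmless but unnecessary, since the equality constraints are already symmetric.
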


\begin{remark}[Comparison with existing undecidability results]
In \cite{tripakis2004undecidable}, it was shown that the following decentralized observation problem is undecidable: determine the existence of a memoryless fusion rule
\[
F:\Sigma_{o,1}^* \times \Sigma_{o,2}^* \to \{0,1\},
\]
such that, for any string $s$, we have
\[
s \in K \quad \Longleftrightarrow  \quad F(P_1(s),P_2(s)) = 1.
\]
In that setting, each local agent is allowed to transmit its entire observation history, resulting in messages of unbounded length sent to the fusion center. Therefore, that result corresponds to the case of an unrestricted (infinite) communication alphabet and an unspecified fusion rule.
Our result is strictly stronger, as we fix the fusion rule a priori to be the exclusive-or operator and restrict each local agent to transmit only a single symbol from a binary communication alphabet. Despite these severe restrictions, the problem remains undecidable.
\end{remark}

\begin{remark}[Comparison with existing decidability results]
For the case of binary communication alphabets, it has been shown in the literature that the problem is decidable (polynomial in the number of states but exponential in the number of agents) when the fusion rule is either conjunction or disjunction. This setting corresponds to the classical decentralized supervisory control architecture with unconditional decisions.
Our result does not contradict these decidability results. In fact, conjunction and disjunction are monotone fusion rules and admit natural control-theoretic interpretations such as ``\emph{disable if at least one agent disables}'' or ``\emph{disable only if all agents disable}''. These interpretations enable constructive synthesis procedures for the local decision maps.
In contrast, XOR is non-monotone. This fundamental structural difference is a key reason for the emergence of undecidability in the XOR setting.
\end{remark}

\section{Thue Word Problem}
\label{sec:thue-word-problem}

In this work, we establish the undecidability of
Problem~\ref{prob:decentralized-observation} via a reduction from the Thue word problem, which is a classical undecidable problem in formal language theory~\cite{post1947recursive}. For self-containedness, we briefly review the problem below. For clarity, we use the term \emph{word} to refer to symbol sequences in the Thue system, and \emph{string} to refer to symbol sequences in the DES setting.

Let $\Gamma$ be a finite alphabet. A \emph{Thue system} over $\Gamma$ is a finite set of unordered rewriting rules
\[
\mathcal{R} = \{(\ell_j, r_j)\mid j=1,\dots,m\},
\]
where $\ell_j, r_j \in \Gamma^*$ are finite words. Each rule $(\ell_j,r_j)$ is bidirectional, meaning that occurrences of $\ell_j$ may be replaced by $r_j$ and vice versa. In particular, for any words $x,y \in \Gamma^*$, one may replace an occurrence of $\ell_j$ by $r_j$ in the context $x(\cdot)y$, or replace $r_j$ by $\ell_j$ in the same context.

Formally, we define the one-step Thue rewriting relation
\[
\leftrightarrow_{\mathcal{R}} \subseteq \Gamma^* \times \Gamma^*
\]
as follows. For two words $w,w' \in \Gamma^*$, we write
\[
w \leftrightarrow_{\mathcal{R}} w'
\]
if there exist $x,y \in \Gamma^*$ and a rule $(\ell_j,r_j) \in \mathcal{R}$ such that either
\[
w = x\ell_j y,\quad w' = xr_j y,
\]
or
\[
w = x r_j y,\quad w' = x \ell_j y.
\]
For example, consider $\Gamma\!=\!\{a,b,c\}$ and $\mathcal{R}\!=\!\{(ab,ba),(c,bb)\}$. Then,  $abc \leftrightarrow_{\mathcal{R}} bac$
since the substring $ab$ in $abc$ can be replaced by $ba$.
Also, we have $bac \leftrightarrow_{\mathcal{R}} babb$
since the substring $c$ in $bac$ can be replaced by $bb$.

Let $\equiv_{\mathcal{R}}$ denote the reflexive and transitive closure of $\leftrightarrow_{\mathcal{R}}$. That is, for $w,w' \in \Gamma^*$,
\[
w \equiv_{\mathcal{R}} w'
\]
if and only if $w'$ can be obtained from $w$ by applying a finite sequence of Thue rewriting steps (possibly zero steps). Equivalently, there exists a sequence of words
\[
w = w_0, w_1, \dots, w_k = w'
\]
such that $w_{t-1} \leftrightarrow_{\mathcal{R}} w_t$ for all $t=1,\dots,k$. In this case, $w$ and $w'$ are said to be \emph{Thue equivalent} with respect to $\mathcal{R}$.

The Thue word problem is defined as follows.

\begin{myprob}[\bf Thue Word Problem]
\label{prob:thue}
Given a finite alphabet $\Gamma$, a finite Thue system $\mathcal{R}$ over $\Gamma$, and two words $\alpha,\beta \in \Gamma^*$, decide whether
\[
\alpha \equiv_{\mathcal{R}} \beta.
\]
\end{myprob}

It is well known that the Thue word problem is undecidable; that is, there exists no algorithm that, for every finite Thue system $\mathcal{R}$ and every pair of words $\alpha,\beta \in \Gamma^*$, decides whether $\alpha$ and $\beta$ are Thue equivalent.

In the sequel, an instance of the Thue word problem is denoted by $\mathcal{I} = (\Gamma,\mathcal{R},\alpha,\beta)$.
Our reduction constructs, from any such instance $\mathcal{I}$, an instance of the decentralized observation problem such that   feasible local decision maps exist if and only if $\alpha \not\equiv_{\mathcal{R}} \beta$. 
Therefore, decidability of the decentralized observation problem would imply decidability of the Thue word problem, yielding a contradiction.
  
\section{Construction of the Decentralized Observation Problem}
\label{sec:construction}

Given an instance of the Thue word problem
\[
\mathcal I=(\Gamma,\mathcal R,\alpha,\beta),
\]
we construct an instance of Problem~\ref{prob:decentralized-observation}
\[
\mathcal P(\mathcal I)=(G,\Sigma_{o,1},\Sigma_{o,2},K,\Delta,F),
\]
where the communication alphabet is binary, i.e., $\Delta=\{0,1\}$, and the fusion rule is fixed as $F(a,b)=a\oplus b$. Throughout the construction, we distinguish between the Thue alphabet $\Gamma$ and the plant event set $\Sigma$.

The overall structure of the construction is conceptually illustrated in Fig.~\ref{fig:proof}. 
We next present the construction details and the underlying intuition.

\begin{figure*}[tp]
	\centering
	\includegraphics[width=0.94\linewidth]{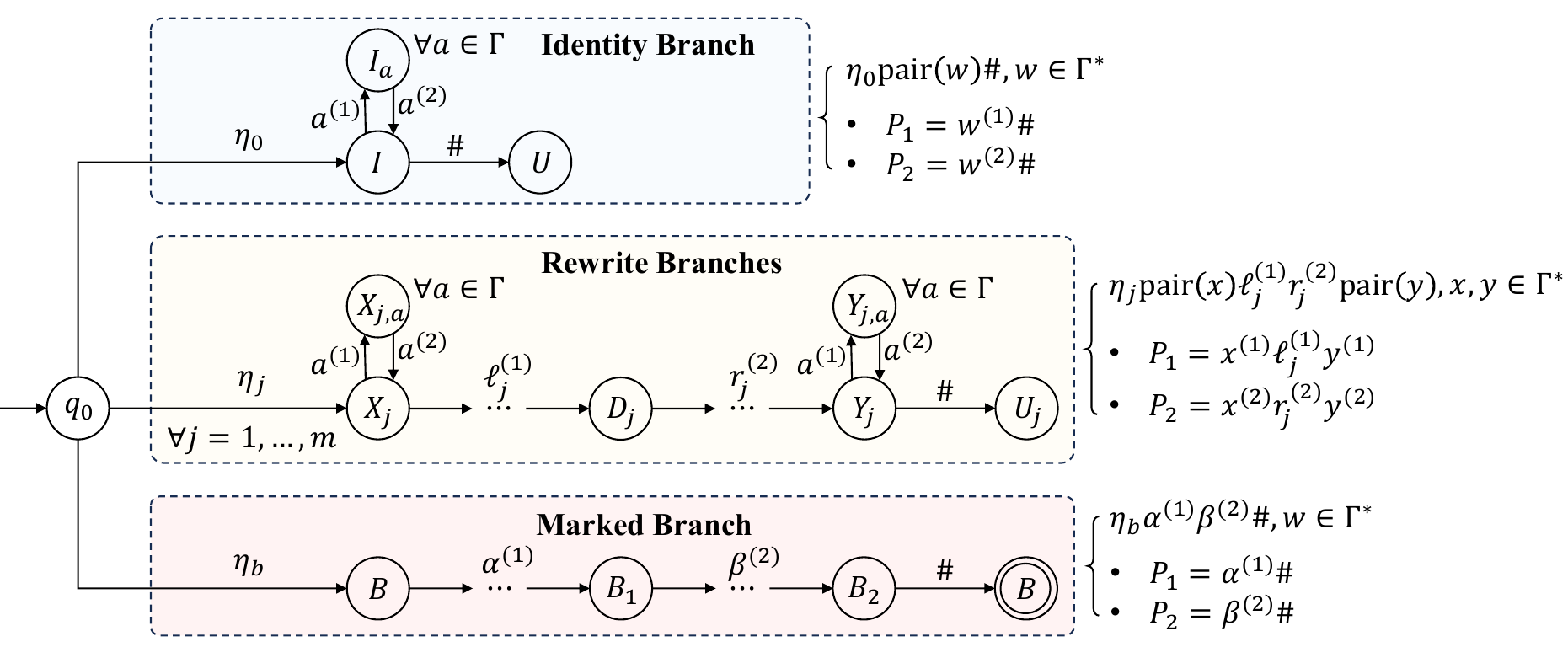}
	\caption{Conceptual illustration of the construction of  a  decentralized observation problem instance $(G,\Sigma_{o,1},\Sigma_{o,2},K,\Delta,F)$ from a  Thue word problem instance $\mathcal I=(\Gamma,\mathcal R,\alpha,\beta)$, where the unique marked state is $Q_K=\{B\}$. }
	\label{fig:proof}
\end{figure*}

\subsection{Event Set and Projections}

The event set of the constructed plant is defined as
\[
\Sigma
=
\Sigma^{(1)} \cup \Sigma^{(2)} \cup \{\#\} \cup \Sigma_{\mathrm{hid}},
\]
where
\begin{itemize}
    \item $\Sigma^{(1)}=\{a^{(1)} \mid a\in\Gamma\}$ (observable only by agent 1),
    \item $\Sigma^{(2)}=\{a^{(2)} \mid a\in\Gamma\}$ (observable only by agent 2),
    \item $\#$ is a termination symbol observable by both agents,
    \item $\Sigma_{\mathrm{hid}}=\{\eta_0,\eta_b,\eta_1,\dots,\eta_m\}$ is a set of events unobservable to both agents.
\end{itemize}
Hence,  for each agent $i\in\{1,2\}$, the observable event set is
\[
\Sigma_{o,i}=\{a^{(i)} \mid a\in\Gamma\}\cup\{\#\},
\]
and the natural projection  $P_i:\Sigma^*\to\Sigma_{o,i}^*$ for agent $i$ is defined accordingly.

For a word $w=a_1\cdots a_k\in\Gamma^*$, we define 
two copies
\[
w^{(1)}=a_1^{(1)}\cdots a_k^{(1)}, \qquad
w^{(2)}=a_1^{(2)}\cdots a_k^{(2)},
\]
and define the paired encoding  
\[
\mathrm{pair}(w)=a_1^{(1)}a_1^{(2)}\cdots a_k^{(1)}a_k^{(2)}.
\]
Then for each word $w\in \Gamma^*$ and each agent $i=1,2$, we have $P_i(\mathrm{pair}(w))=w^{(i)}$, 
i.e.,  only  its own copy is observed.

\subsection{Plant Construction}

Now, we define the deterministic finite-state automaton that is observed by two agents
\[
G=(Q,\Sigma,f,q_0),
\]
where the state set is partitioned as
\[
Q=\{q_0\}\cup Q_{\mathrm{id}}\cup Q_{\mathrm{rew}}\cup Q_{\mathrm{mark}}.
\]
The transition function as well as the state space are defined via three independent branches as follows.

\subsubsection*{\bf 1) Identity Branch}

The identity branch is initially activated by the hidden transition
\[
q_0 \xrightarrow{\eta_0} I.
\]
At state $I$, for each $a\in\Gamma$, we introduce a pair-loop
\[
I \xrightarrow{a^{(1)}} I_a \xrightarrow{a^{(2)}} I,
\]
which generates arbitrary paired words $\mathrm{pair}(w)$ for $w\in\Gamma^*$.
Hence, there are in total $|\Gamma|$ additional states $I_a$ implementing all loops $a^{(1)}a^{(2)},a\in\Gamma$ at state $I$. 
This branch is then terminated by a transition observable to both agents
\[
I \xrightarrow{\#} U,
\]
where $U$ is an \emph{unmarked} terminal state.

Hence this branch generates the prefix-closed language
\[
L_0=\operatorname{Pref}\big(\{\eta_0\,\mathrm{pair}(w)\,\# \mid w\in\Gamma^*\}\big).
\]
Specifically,  for any complete string $\eta_0\,\mathrm{pair}(w)\,\#$ in this branch, the local observations are
\[
P_1(\cdot) = w^{(1)}\# \quad \text{ and }\quad  P_2(\cdot) = w^{(2)}\#.
\]

\subsubsection*{\bf 2) Rewrite Branches}

For each rewrite rule $(\ell_j,r_j)\in\mathcal R,j=1,\dots,m$, a rewrite branch is activated by transition
\[
q_0 \xrightarrow{\eta_j} X_j.
\]
Then the following transitions are defined.
\begin{itemize}
\item \emph{Left Context Generation:} 
At state $X_j$, we generate arbitrary  left context  $x\in\Gamma^*$ via pair-loops using the same construction at state $I$, i.e., 
    \[
    X_j \xrightarrow{a^{(1)}} X_{j,a} \xrightarrow{a^{(2)}} X_j,\forall a\in \Gamma.
    \]
\item \emph{Rewrite Core:} 
After the left-context generation, the automaton executes a deterministic path that realizes the fixed word $\ell_j^{(1)} r_j^{(2)}$. For readability, we use macro-transition notation:
\[
X_j \xrightarrow{ \ell_j^{(1)}} D_j\xrightarrow{r_j^{(2)}}   Y_j
\]
to denote a finite sequence of standard transitions consuming the symbols of $\ell_j^{(1)}$ and $r_j^{(2)}$ sequentially.
This macro-transition corresponds to a finite chain of intermediate states, which are omitted for clarity.
\item 
\emph{Right Context Generation:} At the resulting state $Y_j$, we again generate arbitrary  right context $y\in\Gamma^*$ via pair-loops, i.e., 
    \[
    Y_j \xrightarrow{a^{(1)}} Y_{j,a} \xrightarrow{a^{(2)}} Y_j,\forall a\in \Gamma.
    \]
\end{itemize}
Finally, we terminate this branch by  transition $Y_j \xrightarrow{\#} U_j$, 
where $U_j$ is an \emph{unmarked} terminal state.
We denote by $Q_{rew}$ the set of all states involved in this branch.

Hence this branch generates the prefix-closed language
\[
L_j=\operatorname{Pref}(\{ \eta_j\,\mathrm{pair}(x)\,\ell_j^{(1)}r_j^{(2)}\,\mathrm{pair}(y)\,\#\mid x,y\in\Gamma^*\}).
\]
For any complete string $s_j=\eta_j\,\mathrm{pair}(x)\,\ell_j^{(1)}r_j^{(2)}\,\mathrm{pair}(y)\,\#$ in this branch, the local observations are
\[
P_1(s_j) = (x\ell_j y)^{(1)}\# \quad \text{ and }\quad  P_2(s_j) = (x r_j y)^{(2)}\#.
\]

\subsubsection*{\bf 3) Marked Branch}

The marked branch is activated by
\[
q_0 \xrightarrow{\eta_b} B_0,
\]
followed by a deterministic path
\[
B_0 \xrightarrow{\alpha^{(1)}} B_1 \xrightarrow{\beta^{(2)}} B_2 \xrightarrow{\#} B.
\]
Still, we do not explicitly enumerate all intermediate states along this path, and instead denote by $Q_{\mathrm{mark}}$ the set of all states involved in this branch.

Hence this branch generates the prefix-closed language
\[
L_m=\operatorname{Pref}(\{  \eta_b\,\alpha^{(1)}\beta^{(2)}\# \}).
\]
We define state $B$ as the \textbf{unique marked state}, i.e., $Q_K=\{B\}$, and thus the only marked string is
\[
K=\{s_m=\eta_b\,\alpha^{(1)}\beta^{(2)}\#\}.
\]
For this string,  the local observations are 
\[
P_1(s_m) = \alpha^{(1)}\#\quad \text{ and }\quad  P_2(s_m) = \beta^{(2)}\#.
\]
 
\subsection{Intuition of the Construction}

We now provide an intuitive explanation of the construction and how the three branches enforce the desired logical structure.
Based on the constructed  $G$ and the local observable events, we aim to find two local decision maps $S_1$ and $S_2$ such that
\[
S_1(P_1(s)) \oplus S_2(P_2(s)) = 1
\quad \Longleftrightarrow \quad
s = \eta_b\,\alpha^{(1)}\beta^{(2)}\#,
\]
i.e., the string $s$ reaches the unique marked state $B$.

First, observe that every marked behavior must terminate with the shared event $\#$. Therefore, the essential decisions are those made after the agents observe the termination symbol $\#$. For those observations that do not end with $\#$, the local decision values can be chosen consistently, for instance both equal to $0$, so that the fusion center outputs $0$ on all non-terminal prefixes, which are unmarked. Hence, the correctness of the local decision maps is only determined by their values on terminal observations ending with $\#$. We now analyze the terminal strings generated by each branch.

The purpose of the identity branch is to ensure consistency between the two local encodings of the same word. Specifically, for every word $w\in\Gamma^*$, this branch generates a string of the form $\eta_0\,\mathrm{pair}(w)\,\#$,
where agent 1 observes $w^{(1)}\#$ and agent 2 observes $w^{(2)}\#$.
Since the terminal state $U$ in this branch is unmarked, the correctness condition requires the fusion output to be $0$.
Because XOR equals zero if and only if its two inputs are equal, both agents must assign the same decision value to the corresponding observations. Therefore, we can define a Boolean valuation over words $C:\Gamma^*\!\to\!\{0,1\}$. 
Intuitively, $C(w)$ records the common decision value assigned to the two local copies of the same word $w$.

The purpose of the rewrite branches is to enforce that the valuation $C(\cdot)$ is invariant under Thue rewriting rules. Consider any rule $(\ell_j,r_j)\in\mathcal R$ and arbitrary contexts $x,y\in\Gamma^*$. Due to the pair-loops at states $X_j$ and $Y_j$, the rewrite branch can generate arbitrary left and right contexts. For the complete string generated by this branch, agent 1 observes $P_1(s)=(x\ell_j y)^{(1)}\#,$ 
while agent 2 observes $P_2(s)=(x r_j y)^{(2)}\#.$
Thus, the two local observations correspond to two words that are equivalent under one Thue rewriting step $x\ell_j y \leftrightarrow_{\mathcal R} x r_j y$.
Since all terminal states of the rewrite branches are unmarked, the correctness condition again requires the fusion output to be $0$.  
Therefore, the two local decisions must be equal. Using the definition of $C$, this gives $C(x\ell_j y)=C(x r_j y).$ 
Since $x$ and $y$ are arbitrary and every rule in $\mathcal R$ has a corresponding rewrite branch, the valuation $C(\cdot)$ must be constant along every one-step Thue rewrite. Consequently, $C$ is constant on each Thue equivalence class induced by $\equiv_{\mathcal R}$.

Finally, the marked branch enforces separation between the two distinguished words $\alpha$ and $\beta$ in the Thue word problem instance.  
Specifically, this branch generates the unique marked string $s_m=\eta_b\,\alpha^{(1)}\beta^{(2)}\#$, 
where agent 1 observes $P_1(s_m)=\alpha^{(1)}\#,$ 
and agent 2 observes
$P_2(s_m)=\beta^{(2)}\#$. 
Since the terminal state of this branch is marked,  
the correctness condition requires the fusion output to be $1$.  
Therefore, the two local decisions must be different. Using the definition of $C$, this gives $C(\alpha)\neq C(\beta).$ 
However, the rewrite branches already force $C$ to be constant on each Thue equivalence class, this  contradicts the requirement imposed by the marked branch if $\alpha \equiv_{\mathcal{R}} \beta$.

Thus, for the constructed instance of the decentralized observation problem, a feasible pair of local decision maps exists if and only if $\alpha$ and $\beta$ are not Thue equivalent. 
This establishes a direct correspondence between the feasibility of decentralized observation and the Thue word problem.

\section{Detailed Proof of Undecidability}
\label{sec:proof}

In this section, we complete the reduction by formally proving the correctness of the construction. We show that the constructed decentralized observation instance admits a feasible solution if and only if the corresponding Thue instance does not satisfy $\alpha \equiv_{\mathcal{R}} \beta$.

\begin{mythm}[Correctness of Reduction]
For the constructed instance $\mathcal{P}(\mathcal{I})$, there exist local decision maps $S_1,S_2$ satisfying~\eqref{eq:xor-decision} if and only if
\[
\alpha \not\equiv_{\mathcal{R}} \beta.
\]  
\end{mythm}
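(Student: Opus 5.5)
We prove the two directions separately, starting with the observation that the languages generated by the three branches are pairwise disjoint. They are distinguished by their first hidden event, and we must handle every prefix, not only the complete strings.

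\emph{Necessity} ($\Rightarrow$). Suppose $S_1,S_2$ satisfy~\eqref{eq:xor-decision}. For each $w\in\Gamma^*$, apply~\eqref{eq:xor-decision} to $s=\eta_0\,\mathrm{pair}(w)\,\#$. This string reaches $U\notin Q_K$, so $S_1(w^{(1)}\#)=S_2(w^{(2)}\#)$, and we define $C(w)$ to be this common value.

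Next take any $x,y\in\Gamma^*$ and any rule $j$, and apply~\eqref{eq:xor-decision} to the complete string of the $j$-th rewrite branch. Since $U_j$ is unmarked, we get $S_1((x\ell_jy)^{(1)}\#)=S_2((xr_jy)^{(2)}\#)$, that is, $C(x\ell_jy)=C(xr_jy)$. Because the relation $x\ell_jy\leftrightarrow_{\mathcal R}xr_jy$ is symmetric, $C$ is invariant under every one-step rewrite $w\leftrightarrow_{\mathcal R}w'$. Induction on the length of a rewriting chain $w_0,\dots,w_k$ then shows that $C$ is constant on $\equiv_{\mathcal R}$-classes.

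Finally, $s_m$ reaches $B\in Q_K$, which forces $S_1(\alpha^{(1)}\#)\neq S_2(\beta^{(2)}\#)$, i.e. $C(\alpha)\neq C(\beta)$. Hence $\alpha\not\equiv_{\mathcal R}\beta$.

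\emph{Sufficiency} ($\Leftarrow$). Assume $\alpha\not\equiv_{\mathcal R}\beta$ and let $[\alpha]$ denote the Thue class of $\alpha$. Define
\[
S_i(u)=\begin{cases}1,& u=w^{(i)}\#\text{ with } w\equiv_{\mathcal R}\alpha \text{ and } i=1,\\
1,& u=w^{(i)}\#\text{ with } w\equiv_{\mathcal R}\alpha \text{ and } i=2,\\
0,&\text{otherwise.}\end{cases}
\]
In words, both agents output the indicator of $[\alpha]$ on terminal observations and output $0$ on every observation not ending in $\#$. The maps need not be computable, which the problem permits. We now check~\eqref{eq:xor-decision} for every $s\in\mathcal L(G)$.

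\emph{Non-terminal strings.} If $s$ does not contain $\#$, then neither projection ends in $\#$, so the output is $0\oplus 0=0$. This is correct because such $s$ is not in $K$.

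\emph{Terminal strings.} If $s$ contains $\#$, it is a complete string of exactly one branch.
\begin{itemize}
\item In the identity branch, both agents see the same word $w$ and so output the same value, giving XOR $0$. This is correct since $s\notin K$.
\item In the $j$-th rewrite branch, the agents see $x\ell_jy$ and $xr_jy$. These words are Thue equivalent, so they lie in the same class, the outputs agree, and XOR is $0$.
\item In the marked branch, agent 1 sees $\alpha$ and outputs $1$. Agent 2 sees $\beta\notin[\alpha]$ and outputs $0$. The XOR is $1$, which is correct since $s_m\in K$.
\end{itemize}

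The main point requiring care is exhaustiveness of this case analysis. We must confirm that the only strings of $\mathcal L(G)$ whose projections end in $\#$ are the complete branch strings. This holds because $\#$ is the last transition in every branch and leads to a terminal state. Once this is confirmed, the undecidability stated in Theorem~\ref{thm:main-undecidability} follows from the undecidability of Problem~\ref{prob:thue}, since the construction of $\mathcal P(\mathcal I)$ is effective.
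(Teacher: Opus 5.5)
Your proof is correct and follows the paper's argument essentially verbatim. The identity branch collapses $S_1,S_2$ on terminal observations into a single valuation $C$, the rewrite branches force $C$ to be constant on $\equiv_{\mathcal R}$-classes, and the marked branch forces $C(\alpha)\neq C(\beta)$. For sufficiency you use the indicator of $[\alpha]$ where the paper uses its complement, which makes no difference under XOR. Your explicit choice of $0$ for both agents on observations not ending in $\#$ is slightly more careful than the paper's remark that these values may be ``arbitrary'': unequal values there would produce spurious outputs of $1$ on unmarked prefixes.
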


We prove both directions separately.
 
\subsection*{(1) Necessity: If a feasible solution exists, then $\alpha \not\equiv_{\mathcal{R}} \beta$}

Assume that for the constructed decentralized observation problem, 
there exist local decision maps $S_1,S_2$ satisfying
\begin{equation}\label{eq:dec-req}
    S_1(P_1(s)) \oplus S_2(P_2(s)) = 1
\quad \Longleftrightarrow \quad
s \in K.
\end{equation}
First, we define an induced global valuation $C:\Gamma^* \to \{0,1\}$ by the identity branch:
\begin{equation}\label{eq:C-func}
    C(w) := S_1(w^{(1)}\#) = S_2(w^{(2)}\#), \quad \forall w \in \Gamma^*.
\end{equation}
This function is well-defined because the identity branch generates only unmarked strings. 
Specifically,  for every $w$, since 
\begin{equation}
    \eta_0\operatorname{pair}(w)\#\notin K, 
\end{equation}
according to Eq.~\eqref{eq:dec-req},  we have 
\begin{equation}
    S_1(w^{(1)}\#)\oplus S_2(w^{(2)}\#)=0, 
\end{equation}
which implies that 
\begin{equation}
    S_1(w^{(1)}\#)=S_2(w^{(2)}\#).
\end{equation}

Next, consider any rewrite branch corresponding to a rule $(\ell_j,r_j)\in\mathcal{R}$. 
Let $s$ be an arbitrary complete string generated by this branch.
By construction of the rewrite branch, any such string $s_j$ has the form
\begin{equation}
    s_j = \eta_j\,\mathrm{pair}(x) \,\ell_j^{(1)}r_j^{(2)}\,\mathrm{pair}(y)\,\#
\quad \text{for some } x,y\in\Gamma^*.
\end{equation}
Its projections to local agents  are
\begin{equation}
P_1(s_j) = (x\ell_j y)^{(1)}\#\quad\text{ and }\quad  P_2(s_j) = (x r_j y)^{(2)}\#.
\end{equation}
Since all such strings lead to unmarked states, i.e., $s_j\notin K$, the correctness condition of the decentralized observation problem in Eq.~\eqref{eq:dec-req} requires that
\begin{equation}
    S_1((x\ell_j y)^{(1)}\#) \oplus  S_2((x r_j y)^{(2)}\#) = 0. 
\end{equation}
Substituting into the XOR condition yields
\begin{equation}
    S_1((x\ell_j y)^{(1)}\#) = S_2((x r_j y)^{(2)}\#).
\end{equation}
According to Eq.~\eqref{eq:C-func},  
we know that 
\begin{equation}\label{eq:implies-C}
C(x\ell_j y) =S_1((x\ell_j y)^{(1)}\#)= S_2((x r_j y)^{(2)}\#)= C(x r_j y).    
\end{equation}
Since  context words $x,y\in \Gamma^*$ and rule $(\ell_j,r_j)\in\mathcal{R}$ are arbitrary, 
$C$ is invariant under all Thue rewriting rules, and therefore the value of $C$ is constant on each equivalence class induced by $\equiv_{\mathcal{R}}$, i.e., 
\begin{equation}
    \forall w,w'\in \Gamma^*: w\equiv_{\mathcal{R}}w' \Rightarrow C(w)=C(w').
\end{equation}

Finally, consider the marked branch. Since the terminal state $B$  is the unique marked state, 
we have 
\begin{equation}
    \eta_b \alpha^{(1)}\beta^{(2)}\#\in K, 
\end{equation}
where the projections to agents 1 and 2 are $\alpha^{(1)}\#$ and $\beta^{(2)}\#$, respectively. 
Still, according to Eq.~\eqref{eq:dec-req}, we must have 
\begin{equation}
    S_1(\alpha^{(1)}\#)\oplus S_2(\beta^{(2)}\#)=1. 
\end{equation}
Following the same argument in Eq.~\eqref{eq:implies-C}, we  have  
\begin{equation}
    C(\alpha)\neq C(\beta).
\end{equation}
However, if $\alpha \equiv_{\mathcal{R}} \beta$, then invariance of $C$ would imply
$C(\alpha)=C(\beta)$,  which contradicts the above requirement. Therefore, we have $\alpha \not\equiv_{\mathcal{R}} \beta$, which completes the proof.

\subsection*{(2) Sufficiency: If $\alpha \not\equiv_{\mathcal{R}} \beta$, then a feasible solution exists}

Assume now that $\alpha \not\equiv_{\mathcal{R}} \beta$.
We denote by 
\begin{equation}
    [\alpha]_{\mathcal{R}}=\{w\in\Gamma^*\mid w \equiv_{\mathcal{R}} \alpha\}
\end{equation}
the equivalence class of word $\alpha$ under relation $\mathcal{R}$.
We define a Boolean valuation $C:\Gamma^* \to \{0,1\}$ on Thue equivalence classes as follows:
\begin{equation}
    C(w) =
\begin{cases}
0, & w \in [\alpha]_{\mathcal{R}},\\
1, & w\notin [\alpha]_{\mathcal{R}}.
\end{cases}
\end{equation}
We now construct two local decision maps by:
\begin{equation}\label{eq:local-def}
    S_1(w^{(1)}\#) := C(w)
\quad\text{ and }\quad
S_2(w^{(2)}\#) := C(w).
\end{equation}
For all other observation strings not ending with $\#$, 
we define arbitrary values (e.g., $0$) for both agents, since they only correspond to prefixes and do not affect terminal decisions.

Now, we verify that these two local decision maps defined in Eq.~\eqref{eq:local-def} solve the decentralized observation problem. 
We only need to evaluate terminal strings in each branch 
as the two local agents always have the same decision for those non-terminal prefixes by construction, which means that the global decision is $0$ under the XOR rule. 
\begin{itemize}
\item \textbf{Identity branch:} 
For any word $w\in \Gamma^*$, we have
\begin{equation}
    S_1(w^{(1)}\#)\oplus S_2(w^{(2)}\#)=C(w)\oplus C(w)=0,
\end{equation}
which is consistent with unmarked states.

\item \textbf{Rewrite Branches:}
If string $s\in \mathcal{L}(G)$ is generated by a rewrite branch corresponding to $(\ell_j,r_j)\in\mathcal{R}$, then by construction there exist $x,y\in\Gamma^*$ such that
\begin{equation}
    P_1(s) = (x\ell_j y)^{(1)}\#\quad\text{ and }\quad 
P_2(s) = (x r_j y)^{(2)}\#.
\end{equation}
Since $(\ell_j,r_j)\in\mathcal{R}$, we have the one-step Thue relation
$x\ell_j y \leftrightarrow_{\mathcal{R}} x r_j y$,  
which further implies
\begin{equation}
    x\ell_j y \equiv_{\mathcal{R}} x r_j y.
\end{equation}
Because $C$ is defined on equivalence classes of $\equiv_{\mathcal{R}}$, it follows that
\begin{equation}
    C(x\ell_j y) = C(x r_j y).
\end{equation}
Therefore, we know that
\begin{equation}
    S_1(P_1(s)) \oplus S_2(P_2(s))
= C(x\ell_j y)\oplus C(x r_j y)
= 0,
\end{equation}
which satisfies the correctness condition for all terminal rewrite-branch strings.

\item \textbf{Marked Branch:}
For the unique terminal and marked string in this branch
\begin{equation}
    s_m = \eta_b\,\alpha^{(1)}\beta^{(2)}\#,
\end{equation}
the local observations are
\begin{equation}
    P_1(s_m)=\alpha^{(1)}\#\quad\text{ and }\quad  P_2(s_m)=\beta^{(2)}\#.
\end{equation}
Then by Eq.~\eqref{eq:local-def}, we have 
\begin{equation}
    S_1(P_1(s_m)) \oplus S_2(P_2(s_m))
= C(\alpha)\oplus C(\beta).
\end{equation}
Since $\alpha \not\equiv_{\mathcal{R}} \beta$, they belong to different equivalence classes. Hence,
\begin{equation}
C(\alpha)\neq C(\beta),
\end{equation}
and therefore
\begin{equation}
S_1(P_1(s_m)) \oplus S_2(P_2(s_m)) = 1,
\end{equation}
which satisfies the requirement for the marked branch.
\end{itemize}
Thus all terminal strings ending up with $\#$  satisfy the requirement in the decentralized observation problem, which means that local decision mappings $S_1,S_2$ constructed in Eq.~\eqref{eq:local-def} are a valid solution. 

\begin{remark}
Although the classical Thue word problem is defined via symmetric rewrite rules, in our reduction it is sufficient to consider only one direction of each rule. That is, we only construct path of form \(\ell_j^{(1)}r_j^{(2)}\) in the rewrite branch. The reason is that the constructed constraint enforces equality preservation under rewriting. Such an equality constraint is inherently symmetric, so adding the inverse rule by path $r_j^{(1)}\ell_j^{(2)} $ would impose no additional restriction beyond what is already enforced.  
\end{remark}

\begin{remark}[Why the reduction is specific to XOR] 
The key step in the reduction is the identity branch, which forces
\[
S_1(w^{(1)}\#)\oplus S_2(w^{(2)}\#)=0
 \Rightarrow 
S_1(w^{(1)}\#)=S_2(w^{(2)}\#),
\]
allowing the definition of a single well-defined valuation \(C(w)\). This collapse into a shared scalar labeling is essential, since the rewrite branches then enforce invariance
$C(x\ell_j y)=C(xr_j y)$
so that the Thue congruence is encoded as equality constraints on a single function.

For AND/OR fusion rules, this collapse step fails. For example, under AND rule, that
\[
S_1(w^{(1)}\#)\wedge S_2(w^{(2)}\#)=0
\]
does not imply equality of the two local decisions; it only excludes the pair \((1,1)\). Hence the identity branch does not induce a well-defined common valuation \(C(w)\), and the rewrite branches cannot be interpreted as invariance constraints on a single labeling of words.
For the OR rule, the condition
\[
S_1(w^{(1)}\#)\vee S_2(w^{(2)}\#)=0
\]
does force $S_1(w^{(1)}\#)=S_2(w^{(2)}\#)=0$ for every word \(w\). However, this collapse is degenerate: it forces all such terminal observations to have value \(0\). In particular, the marked branch would then also satisfy $S_1(\alpha^{(1)}\#)\vee S_2(\beta^{(2)}\#)=0\vee0=0$, so the required separation on marked state cannot be achieved.
Therefore, the proof is not merely using decentralized  observations; it specifically exploits the fact that XOR distinguishes equality from inequality between the two local decisions. This equality/disequality structure is absent in the classical AND and OR fusion rules.
\end{remark}
\section{Undecidability in Decentralized Control, Diagnosis, and Prognosis}\label{sec:extensions}
In this section, we further extend the undecidability results from the decentralized observation problem to more specific decentralized decision-making problems, including supervisory control, fault diagnosis, and fault prognosis. We show that the decentralized observation problem can be  reduced to each of these settings. Consequently, the existence of local decision maps in these specific decentralized frameworks is also undecidable.

\subsection{Undecidability in Decentralized Supervisory Control}
 
In the context of decentralized supervisory control \cite{cieslak1988supervisory,rudie1992think,yoo2002general}, each local agent (i.e., a local supervisor) is associated with its own set of controllable events \(\Sigma_{c,i} \subseteq \Sigma\). Accordingly, each supervisor generates a local control action depending on its observation history and the event to be controlled, i.e.,
\[
S_i: \Sigma_{o,i}^* \times \Sigma_{c,i} \to \Delta,
\]
where \(\Delta\) is a finite decision alphabet.

Since we consider a two-agent setting, the fusion rule \(F:\Delta \times \Delta \to \{0,1\}\) is applied only when a controllable event is jointly interpreted by both supervisors. For events controlled exclusively by one agent, i.e., \(\sigma \in \Sigma_{c,i} \setminus \Sigma_{c,j}\), the corresponding symbol in \(\Delta\) already encodes the control decision. To extract the actual control action, we introduce a decoding function $D:\Delta \to \{0,1\},$ 
which maps each local decision symbol to its corresponding enable/disable action. This decoding can be regarded as an implicit component of the fusion architecture.
We note that such a decoding function arises naturally in classical conjunctive (AND) and disjunctive (OR) architectures, where decision symbols are directly interpreted as control commands (e.g., ``enable" or ``disable"), and thus no additional decoding layer is required.

The global supervisory control map is therefore defined as
\[
S_{1,2}^F : \mathcal{L}(G) \times \Sigma \to \{0,1\},
\]
where for any \(s \in \mathcal{L}(G)\) and \(\sigma \in \Sigma\), we have
\begin{align}
&S_{1,2}^F(s,\sigma) =   \\
&\begin{cases}
1, & \sigma \notin \Sigma_{c,1} \cup \Sigma_{c,2},\\
D(S_1(P_1(s),\sigma)), & \sigma \in \Sigma_{c,1} \setminus \Sigma_{c,2},\\
D(S_2(P_2(s),\sigma)), & \sigma \in \Sigma_{c,2} \setminus \Sigma_{c,1},\\
F\big(S_1(P_1(s),\sigma),S_2(P_2(s),\sigma)\big), & \sigma \in \Sigma_{c,1} \cap\Sigma_{c,2}.
\end{cases}\nonumber
\end{align}
That is, uncontrollable events are always enabled (decision 1), while controllable events are enabled according to the fused decision.

The corresponding closed-loop language under \(S_{1,2}^F\) is defined inductively as follows:
\begin{itemize}
    \item 
    $\varepsilon \in \mathcal{L}(S_{1,2}^F/G)$; 
    \item 
    for any \(s \in \Sigma^*,\sigma \in \Sigma\), 
    we have $s\sigma \in \mathcal{L}(S_{1,2}^F/G)$ 
    iff 
      (i) $s \in \mathcal{L}(S_{1,2}^F/G)$; 
     (ii) $s\sigma \in \mathcal{L}(G)$; and 
    (iii) $S_{1,2}^F(s,\sigma)=1$.
\end{itemize} 

The \textbf{decentralized supervisory control problem}  is to decide the existence of local supervisory control laws \(S_i\) such that the resulting closed-loop behavior exactly matches a given specification language \(L_{\mathrm{spec}} \subseteq \mathcal{L}(G)\), i.e.,
\[
\mathcal{L}(S_{1,2}^F / G) = L_{\mathrm{spec}}.
\]

Hereafter, we show that the decentralized supervisory control problem is also undecidable, even when restricted to binary local decisions \(\Delta=\{0,1\}\) and the XOR  fusion rule.

\begin{mythm}
Let \((G,\Sigma_{o,1},\Sigma_{o,2},K,\Delta,F)\) be an instance of the decentralized observation problem. Then there exists a polynomial-time reduction from this instance to an instance of the decentralized supervisory control problem. Consequently, the decentralized supervisory control problem is undecidable, even under binary decision alphabets and XOR fusion.
\end{mythm}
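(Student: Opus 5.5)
We reduce the decentralized observation problem to supervisory control by attaching a fresh controllable event to every state of $G$ and letting the marked/unmarked status of a state dictate whether that event must be enabled. Given $(G,\Sigma_{o,1},\Sigma_{o,2},K,\{0,1\},\oplus)$ with $K$ in state-based form $Q_K$, build $G'$ from $G$ by adding a fresh event $c$ and a fresh sink state $z$, with a transition $q\xrightarrow{c}z$ for every $q\in Q$. Set $\Sigma'=\Sigma\cup\{c\}$ and keep the observable sets $\Sigma_{o,1},\Sigma_{o,2}$ unchanged, so $c$ is unobservable to both agents. All original events are uncontrollable, $\Sigma_{c,1}=\Sigma_{c,2}=\{c\}$, and the fusion rule for $c$ is $F=\oplus$. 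The specification is
\[
L_{\mathrm{spec}}=\mathcal L(G)\cup K c .
\]
The construction is clearly polynomial, and $L_{\mathrm{spec}}$ is regular and prefix-closed, being the language of a finite automaton.

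\textbf{Equivalence.} Since all events of $\Sigma$ are uncontrollable and always enabled, every string of $\mathcal L(G)$ lies in the closed loop. Hence $\mathcal L(S^F_{1,2}/G')=\mathcal L(G)\cup\{sc\mid s\in\mathcal L(G),\,S_1(P_1(s),c)\oplus S_2(P_2(s),c)=1\}$. Here we use that $P_i(s)$ computed in $G'$ equals $P_i(s)$ in $G$ for $s\in\Sigma^*$. Also, $z$ has no outgoing transitions, so nothing follows $c$.

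This closed loop equals $L_{\mathrm{spec}}$ exactly when, for every $s\in\mathcal L(G)$,
\[
s\in K\iff S_1(P_1(s),c)\oplus S_2(P_2(s),c)=1 .
\]
That is precisely \eqref{eq:exact-decentralized-decision} for the maps $\tilde S_i(\cdot):=S_i(\cdot,c)$. Conversely, any observation maps $\tilde S_i$ define supervisors $S_i(\cdot,c):=\tilde S_i(\cdot)$. The instance therefore has a solution iff the control instance does.

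\textbf{Undecidability.} Composing with the reduction of Theorem~\ref{thm:main-undecidability}, a decider for decentralized control would decide the Thue word problem. The control problem is therefore undecidable already with $\Delta=\{0,1\}$ and XOR fusion. The decoding function $D$ is irrelevant here, since no event is controlled by only one agent.

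\textbf{Main obstacle.} The delicate point is ensuring that the closed-loop language exactly encodes the membership decision at every $s$, not merely at reachable or terminal strings. This is why all original events are made uncontrollable, so that $\mathcal L(G)$ is never pruned. The construction must also place $c$ outside both $\Sigma_{o,i}$, so that the local observations of $s$ coincide with those in the original problem and the supervisors' information structure is unchanged.
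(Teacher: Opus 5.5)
Your proposal is correct and follows essentially the same approach as the paper. Both introduce a fresh event $c$ (the paper's $\sigma_c$), controllable by both supervisors and appended after every string of $\mathcal{L}(G)$, make all original events uncontrollable, use XOR fusion, and set $L_{\mathrm{spec}}=\mathcal{L}(G)\cup Kc$, so the fused decision on $c$ after $s$ is exactly the membership bit for $K$. Your version adds some explicit details that the paper leaves implicit (the concrete sink-state automaton, $c$ unobservable to both agents, and the irrelevance of $D$), but the argument is the same.
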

\begin{proof}
We construct a decentralized supervisory control instance 
\((\tilde G,\tilde\Sigma_{c,1},\tilde\Sigma_{c,2},L_{\mathrm{spec}})\) 
such that 
\(\mathcal{L}(\tilde G)=\mathcal{L}(G)\cup \{s\sigma_c \mid s\in \mathcal{L}(G)\}\), 
    \(\tilde\Sigma_{c,1}=\tilde\Sigma_{c,2}=\{\sigma_c\}\),
    and \(L_{\mathrm{spec}}=\mathcal{L}(G)\cup \{s\sigma_c \mid s\in K\}\). 
The fusion rule used is still the XOR operator $\oplus$.

$(\Rightarrow)$
Assume there exist two local supervisors \(S_1,S_2\) such that
$\mathcal L(S_{1,2}^F/\tilde G)=L_{\mathrm{spec}}$.
We can define induced maps:
\[
\hat S_i(P_i(s)) := S_i(P_i(s),\sigma_c).
\]
Then for any \(s\in \mathcal{L}(G)\), we have
\begin{align}
 s\sigma_c \in \mathcal L(S_{1,2}^F/\tilde G)  
 &\Leftrightarrow
S_{1,2}^F(s,\sigma_c)=1\nonumber\\
& \Leftrightarrow
\hat S_1(P_1(s)) \oplus \hat S_2(P_2(s))=1\nonumber\\
& \Leftrightarrow
s\sigma_c \in L_{\mathrm{spec}}\nonumber\\
 &\Leftrightarrow
 s\in K.\nonumber 
\end{align}
Hence, \(\hat S_1,\hat S_2\) solve the decentralized observation problem.

$(\Leftarrow)$ 
Assume there exist two local decision maps \(S_1,S_2\) solving the decentralized observation problem, i.e.,
\[
S_1(P_1(s)) \oplus S_2(P_2(s))=1
\Leftrightarrow
s\in K.
\]
Construct supervisors for the control problem by:
\[
S_i^{ctrl}(P_i(s),\sigma_c):=S_i(P_i(s)).
\]
Then the fused control becomes:
\[
S_{1,2}^F(s,\sigma_c)=
S_1(P_1(s)) \oplus S_2(P_2(s)).
\]
Thus, we have 
\[
S_{1,2}^F(s,\sigma_c)=1
\Leftrightarrow
s\in K
\Leftrightarrow
s\sigma_c \in L_{\mathrm{spec}}.
\]
Since all other events are uncontrollable, they are always enabled, hence do not affect the closed-loop language. Therefore, $\mathcal L(S_{1,2}^F/\tilde G)=L_{\mathrm{spec}}.$ 
\end{proof}

\subsection{Undecidability in Decentralized Fault Diagnosis Problem}
In the context of decentralized fault diagnosis \cite{debouk2000coordinated,wang2007diagnosis}, we assume that the plant contains a fault event \(\sigma_F \in \Sigma\). We define the fault language as
\[
\Psi(\sigma_F) := \Sigma^* \{\sigma_F\} \  \cap \mathcal{L}(G),
\]
i.e., the set of all strings generated by the plant that end   with the fault event \(\sigma_F\). 
We denote by $\sigma_F\in s$ that the fault event occurs in string $s\in \Sigma^*$. 

Each local agent acts as a diagnoser and produces a local decision based on its observation history. Thus, each diagnoser is a mapping of the form $S_i : \Sigma_{o,i}^* \to \Delta$, 
where \(\Delta\) is a finite decision alphabet. Given a fusion rule \(F:\Delta \times \Delta \to \{0,1\}\), the objective of the \textbf{decentralized fault diagnosis problem} is to decide the existence of local diagnosers
$S_i$ such that:
\begin{itemize}
    \item[1)]
    Any  occurrence of the fault event can be detected within a bounded delay, i.e., 
    \begin{align}
   &(\exists n\in\mathbb{N})(\forall s\in \Psi(\sigma_F))(\forall t \in \mathcal{L}(G)/s:|t| \geq n)\nonumber \\
   &\text{s.t. }F\big(S_1(P_1(st)),S_2(P_2(st))\big)=1.\nonumber
\end{align}
    \item[2)] 
    No false alarm is made when the fault event does not actually occur, i.e., for any $s\in \mathcal{L}(G)$, we have
    \begin{equation}
         \sigma_F\not\in s\Rightarrow F\big(S_1(P_1(s)),S_2(P_2(s))\big)=0.
    \end{equation}
\end{itemize}

We show that this decentralized fault diagnosis problem is also undecidable  when the local decision alphabet is finite.
\begin{mythm}
Let \((G,\Sigma_{o,1},\Sigma_{o,2},K,\Delta,F)\) be an instance of the decentralized observation problem. Then there exists a polynomial-time reduction from this instance to an instance of the decentralized fault diagnosis problem. Consequently, the decentralized fault diagnosis  problem is undecidable, even under binary decision alphabets and XOR fusion.
\end{mythm}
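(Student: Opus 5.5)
The plan is a direct, instance-to-instance construction that works for an arbitrary fusion rule $F$. Given $(G,\Sigma_{o,1},\Sigma_{o,2},K,\Delta,F)$ with $K$ in state-based form via $Q_K$, I will build a diagnosis plant $\tilde G$ over $\tilde\Sigma=\Sigma\cup\{\sigma_F,\sigma',\tau\}$, where the three events are fresh. The events $\sigma_F$ and $\sigma'$ are unobservable to both agents, and $\tau$ is observable to both, so $\tilde\Sigma_{o,i}=\Sigma_{o,i}\cup\{\tau\}$.

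The construction of $\tilde G$ is as follows. Add two new states $q_F$ and $q_N$, each carrying a $\tau$ self-loop. From every $q\in Q_K$ add a transition $q\xrightarrow{\sigma_F}q_F$, and from every $q\in Q\setminus Q_K$ add $q\xrightarrow{\sigma'}q_N$. Then
\[
\mathcal L(\tilde G)=\mathcal L(G)\cup\{s\sigma_F\tau^k\mid s\in K,k\ge0\}\cup\{s\sigma'\tau^k\mid s\in\mathcal L(G)\setminus K,k\ge0\}.
\]
This adds only two states and linearly many transitions, so the reduction is polynomial. The key observation is that the faulty strings are exactly the continuations of strings in $K$. The local observations of $s\sigma_F\tau^k$ and $s\sigma'\tau^k$ are both $P_i(s)\tau^k$. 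Hence after $\tau^k$ each agent sees precisely what it saw in the original instance, followed by a common marker.

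For the direction from an observation solution to diagnosers, let $\hat S_1,\hat S_2$ solve the observation problem. Fix a pair $(d_1,d_2)\in\Delta^2$ with $F(d_1,d_2)=0$. Such a pair exists whenever the observation instance is solvable with $K\ne\mathcal L(G)$; for XOR take $(0,0)$, and the degenerate case where $F\equiv 1$ can be dispatched separately. Define the diagnosers by
\[
S_i(u\tau^k):=\hat S_i(u)\ \text{ for } k\ge1,\qquad S_i(u):=d_i \ \text{ for } u\in\Sigma_{o,i}^*.
\]
Then no false alarm occurs on $\mathcal L(G)$, since the output there is $F(d_1,d_2)=0$. On $s\sigma'\tau^k$ the output is $0$ because $s\notin K$. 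On $s\sigma_F\tau^k$ with $k\ge1$ the output is $1$ because $s\in K$. Note that $\mathcal L(\tilde G)/(s\sigma_F)=\tau^*$, so the delay bound $n=1$ works.

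For the converse, let diagnosers $S_1,S_2$ with delay bound $n$ be given, and set $\hat S_i(u):=S_i(u\tau^n)$. Take $s\in\mathcal L(G)$. If $s\in K$, then $s\sigma_F\in\Psi(\sigma_F)$ and $\tau^n$ is a continuation of length $n$, so $F(\hat S_1(P_1(s)),\hat S_2(P_2(s)))=1$. If $s\notin K$, then $s\sigma'\tau^n$ is fault-free, so the no-false-alarm condition gives the value $0$. This establishes exactly \eqref{eq:exact-decentralized-decision}.

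The one delicate point is that both agents receive the shared marker $\tau$, so their observations are aligned on the same $k$. It is also essential that the non-$K$ strings receive the parallel $\sigma'$ branch, with identical observations, so that false-alarm freedom forces the value $0$ there. Without that branch, $S_i(u\tau^n)$ would be unconstrained for $s\notin K$.

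Finally, combine the reduction with Theorem~\ref{thm:main-undecidability}, taking $\Delta=\{0,1\}$ and $F=\oplus$ (so $(d_1,d_2)=(0,0)$). This gives undecidability of decentralized fault diagnosis under binary decisions and XOR fusion.
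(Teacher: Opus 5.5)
Your proposal is correct and follows essentially the same route as the paper. It uses the same extension of $G$: an unobservable fault event out of $Q_K$, an unobservable non-fault event out of $Q\setminus Q_K$, and an observable $\tau$-loop, with diagnosers that output a constant before the first $\tau$ and the original decision map afterwards (delay bound $n=1$). Your converse sets $\hat S_i(u):=S_i(u\tau^n)$ and uses the parallel $\sigma'$-branch to force the value $0$ off $K$, and you choose a zero $(d_1,d_2)$ of $F$ rather than the constant $0$; both make precise points the paper treats only informally, the second mattering only for fusion rules other than XOR.
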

\begin{proof}
We construct a decentralized diagnosis instance \((\tilde G,\sigma_F)\) from an instance of the decentralized observation problem as follows.

We extend the plant \(G=(Q,\Sigma,f,q_0)\) by introducing three new events 
$\{\sigma_F,\sigma_N,\tau\}$,  where
\begin{itemize}
    \item 
    \(\sigma_F\): fault event (unobservable to both agents), 
    \item 
    \(\sigma_N\): non-fault event (also unobservable),
    \item 
    \(\tau\): padding event (observable to both agents).
\end{itemize}
Let \(K \subseteq \mathcal L(G)\) be the specification language from the observation problem.
We define the new plant language:
\begin{equation}
    \tilde{\mathcal L}(G)=\mathcal L(G)
\cup
\big(\mathcal L(G)\setminus K\big)\{\sigma_N\}\{\tau\}^*
\cup K \{\sigma_F\}\{\tau\}^*. 
\end{equation}
Let \(Q_K \subseteq Q\) be the set of states accepting language \(K\) in $G$.
The new plant \(\tilde G\) is actually constructed from $G$ by:
\begin{itemize}
    \item 
    adding a new absorbing state \(q_{\mathrm{new}}\) with a self-loop transition
    $ q_{\mathrm{new}} \xrightarrow{\tau} q_{\mathrm{new}}.$
    \item 
    for each \(q \in Q_K\), add transition 
   $q \xrightarrow{\sigma_F} q_{\mathrm{new}}$,
   \item 
    for each \(q \in Q \setminus Q_K\), add transition $q \xrightarrow{\sigma_N} q_{\mathrm{new}}$.
\end{itemize}
We extend observable alphabets by $\tilde\Sigma_{o,i} = \Sigma_{o,i} \cup \{\tau\}$. 

Next, we show that the solvability of the original decentralized observation problem  and 
the constructed decentralized fault diagnosis problem are equivalent.

($\Rightarrow$)
Assume that there exist local decision maps \(S_1,S_2\) solving the decentralized observation problem.
We construct local diagnosers \(\tilde S_i\)  as follows 
\begin{equation}
 \tilde S_i(v_i)=
\begin{cases}
S_i(u_i), & \text{if } v_i=u_i\tau^k \quad u_i\in P_i(\mathcal L(G)),k\geq 1,\\
0, & \text{otherwise}.
\end{cases}   
\end{equation}
That is, before the first occurrence of the observable padding event \(\tau\), both diagnosers output \(0\). After \(\tau\) is observed, each diagnoser applies the original observation map to the observation prefix generated by the original plant \(G\).
Hence the fault is detected after at most one observable \(\tau\)-transition following the occurrence of \(\sigma_F\). Thus the bounded-delay detection condition holds with delay bound \(n=1\).
For observation of string $s\sigma_N\tau$, since \(s\notin K\), the original decentralized observation property implies that the right-hand side equals \(0\). Therefore, no non-faulty execution produces a fused alarm.
Consequently, \(\tilde S_i\) satisfy both bounded-delay fault detection and absence of false alarms. Hence a valid decentralized fault diagnoser exists.

($\Leftarrow$)
Assume now that local diagnosers exist for $\tilde G$ satisfying bounded-delay detection and no false alarms.
By construction, fault occurrences are in one-to-one correspondence with membership in $K$, 
i.e., $s\in K$ iff $s\sigma_F\tau\in \mathcal{L}(\tilde G)$.  
Since $\sigma_F$ is unobservable, the diagnoser must base its decision entirely on observations of $s$ through $\Sigma_{o,i}$ (and $\tau$, which is non-informative after fault occurrence).
Thus the diagnoser induces local maps $S_1,S_2$ over the prefix $s$ such that
\[
F(S_1(P_1(s)),S_2(P_2(s)))=1 \iff s \in K.
\]
Therefore, these induced maps solve the original decentralized observation problem. 
\end{proof}

\subsection{Undecidability in Decentralized Fault Prognosis}
Another relevant problem widely investigated is  the decentralized  fault prognosis \cite{kumar2009decentralized,yin2018decentralized}. The basic setting is similar to the fault diagnosis problem. However, the objective of the \textbf{decentralized fault prognosis problem} is to decide the existence of local decision maps $S_i$ such that:
\begin{itemize}
\item 
\emph{No false alarm:}
whenever the fusion center issues an alarm, a fault is guaranteed to occur, i.e., 
for any $s \in \mathcal{L}(G)$, we have
\begin{align}
&    F(S_1(P_1(s)),S_2(P_2(s))) = 1\\
 \Rightarrow& 
(\exists n \in \mathbb{N}) (\forall t\in \mathcal L(G)/s)
\ [|t| \geq n\Rightarrow \sigma_F \in st].\nonumber
\end{align}
\item 
\emph{No missed fault:}
the fusion center must issue an alarm strictly before the fault occurrence, i.e., 
for any $ s \in \Psi(\sigma_F)$, we have
\begin{align}
(\exists s' \in \mathrm{Pref}(s)) 
[F(S_1(P_1(s')),S_2(P_2(s')))\!=\!1\wedge \sigma_F \notin s'].
\end{align}
\end{itemize}

Still, this decentralized fault prognosis problem is undecidable even when the local decision alphabet is finite. Since its proof is very similar to that of the decentralized supervisory control problem, we only sketch the proof briefly. 
\begin{mythm}
Let \((G,\Sigma_{o,1},\Sigma_{o,2},K,\Delta,F)\) be an instance of the decentralized observation problem. Then there exists a polynomial-time reduction from this instance to an instance of the decentralized fault prognosis problem. Consequently, the decentralized fault prognosis problem is undecidable, even under binary decision alphabets and XOR fusion.
\end{mythm}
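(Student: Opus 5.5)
The plan is to reduce the decentralized observation problem to prognosis by the same gadget used in the supervisory control and diagnosis proofs. Membership in $K$ is turned into the question of whether a fault is \emph{inevitable} in the near future. The fault event itself is kept invisible, so the alarm must be raised from observations of the original string alone.

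\textbf{Construction.} Given $(G,\Sigma_{o,1},\Sigma_{o,2},K,\Delta,F)$ with $F=\oplus$ and marked states $Q_K$, build $\tilde G$ from $G$ as follows.
\begin{itemize}
\item Add two fresh events: a hidden event $\sigma_h$, unobservable to both agents, and the fault $\sigma_F$, also unobservable to both agents.
\item Add a new state $q_{\mathrm{new}}$. From every $q\in Q$ add $q\xrightarrow{\sigma_h}q_{\mathrm{new}}$.
\item Make the continuation of $q_{\mathrm{new}}$ depend on whether the $\sigma_h$-step left a marked state. To keep this information, use two copies: $q_K$ for sources in $Q_K$ and $q_N$ for sources in $Q\setminus Q_K$.
\item From $q_K$ add $q_K\xrightarrow{\sigma_F}q_F$, where $q_F$ carries a $\tau$ self-loop and $\tau$ is observable to both agents. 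So after a marked prefix $s$, the continuation $s\sigma_h$ is followed by an unavoidable fault.
\item From $q_N$ add only a $\tau$ self-loop, and never a fault.
\item To avoid being forced to predict too early, also give every original state a hidden escape $q\xrightarrow{\sigma_N}q_{N}'$ with a $\tau$ self-loop. Then for $s\in\mathcal L(G)$ no fault is inevitable before $\sigma_h$ is chosen.
\end{itemize}
The issue with this naive version is that at $s$ itself the fault is not inevitable, since $\sigma_N$ is available. The alarm is therefore legitimate only at $s\sigma_h$, which the agents observe exactly as $s$. This is where care is needed.

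\textbf{Fix for the obstacle.} To make the alarm both legal and necessary at an observation-equivalent point, I would instead encode $K$ at the level of the post-$\sigma_h$ state.
\begin{itemize}
\item Alarms raised on strings $s\sigma_h$ with $s\in K$ are sound, because a fault is then inevitable within one step.
\item Alarms on $s\sigma_h$ with $s\notin K$, and on $s\in\mathcal L(G)$ itself, must be false, because an infinite $\tau$-continuation avoids the fault.
\item Since $P_i(s\sigma_h)=P_i(s)$, any alarm at $s\sigma_h$ is also an alarm at $s$, which is a false alarm.
\end{itemize}
So the observations must be separated. I would make $\sigma_h$ observable to both agents and give it a shared symbol, so that observing $P_i(s)\sigma_h$ is distinguishable from $P_i(s)$. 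Agents then output $0$ on observations without $\sigma_h$ and $S_i(u_i)$ on $u_i\sigma_h$, where $u_i=P_i(s)$. Because $\sigma_F$ is unobservable and is the only event from $q_K$, the no-missed-fault condition for $s\sigma_h\sigma_F$ forces an alarm at some prefix. Alarms at prefixes not containing $\sigma_h$ are forbidden, since those prefixes can escape via $\sigma_N$. Hence the alarm must occur exactly at $s\sigma_h$, giving $F(S_1(P_1(s)),S_2(P_2(s)))=1$ for $s\in K$. The no-false-alarm condition at $s\sigma_h$ with $s\notin K$ gives $F(\cdot,\cdot)=0$.

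\textbf{Proof steps.}
\begin{itemize}
\item $(\Leftarrow)$: take observation maps $S_1,S_2$ and define diagnosers as above. Verify no false alarm by case analysis: alarms occur only at strings ending in $\sigma_h$ or extending them with $\tau$, and only when $s\in K$, where the fault is inevitable. Verify no missed fault, since every faulty string has the form $s\sigma_h\sigma_F\tau^k$ with $s\in K$, and the alarm occurs at $s\sigma_h$.
\item $(\Rightarrow)$: define $\hat S_i(u):=S_i(u\sigma_h)$ and use the two conditions as in the paragraph above.
\end{itemize}
The construction is clearly polynomial. Undecidability then follows from Theorem~\ref{thm:main-undecidability}.
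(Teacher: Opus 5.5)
Your reduction is correct, but it follows a different route from the paper. The paper only sketches a direct encoding: the fault is attached at the states of $Q_K$, making them ``boundary'' (predictable-fault) states, and non-fault loops are attached at all other states. The alarm must then be raised on the strings of $K$ themselves, as in the diagnosis reduction.

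You instead append a one-shot query event $\sigma_h$, observable to both agents, after every string of $G$. It leads to an unavoidable fault exactly when the source state is in $Q_K$, and you add a hidden escape $\sigma_N$ at every original state. This is essentially the supervisory-control gadget, with $\sigma_h$ playing the role of $\sigma_c$. It makes both directions clean:
\begin{itemize}
\item every alarm on $\mathcal{L}(G)$ is false;
\item the only admissible alarm point for the faulty string $s\sigma_h\sigma_F$ is $s\sigma_h$;
\item the fused value at $s\sigma_h$ is forced to be $1$ iff $s\in K$.
\end{itemize}

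What your gadget buys is generality. Whether the fault is inevitable at $s\sigma_h$ depends only on whether $s\in K$, so your reduction works for an arbitrary instance. The direct encoding needs the fault to become inevitable exactly on $K$ and nowhere earlier. That tacitly requires the $Q_K$ states to have no fault-avoiding continuations in $G$. This holds for the Thue-derived instance, where $B$ is terminal, which suffices for undecidability. It does not hold for every instance, as the theorem's first sentence claims.

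The price is one extra observable event. Your observation that $\sigma_h$ must be observable is exactly what makes it work: otherwise $s$ and $s\sigma_h$ look identical and no sound alarm is possible.

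In a write-up:
\begin{itemize}
\item declare $\sigma_h$ observable from the start (you first call it hidden);
\item call the local maps prognosers, not diagnosers;
\item fix their value on $u\sigma_h\tau^k$ (either $0$ or $S_i(u)$ works under XOR).
\end{itemize}
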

\begin{proof}[Proof sketch]
The reduction is analogous to that used for decentralized supervisory control. The key idea is to encode the specification language $K$ as the set of states at which the fault event is defined, i.e., the boundary region.
We can construct a polynomial-time transformation that maps states in $Q_K$ to boundary (predictable-fault) states, and states in $\mathcal{L}(G)\setminus K$ to  states, at which non-fault loops are defined. Under this encoding, issuing a prognostic alarm corresponds exactly to entering the boundary region, which plays the same role as the decision-triggering states in the decentralized observation problem.
\end{proof}

\section{Conclusion}\label{sec:conclusion}

In this work, we show that decentralized decision-making problems for finite-state systems are undecidable, even under the highly restricted setting where each local agent communicates a single binary decision symbol to a central fusion station.
In particular, we identify the exclusive-or (XOR) fusion rule as a fundamental source of undecidability. Unlike classical conjunctive or disjunctive architectures, XOR is inherently non-monotone and lacks a natural control-theoretic interpretation, which prevents the application of standard decidable techniques used in decentralized supervisory control.
Our results reveal a fundamental limitation of decentralized decision-making architectures: the choice of fusion rule does not merely affect computational efficiency or expressive power, but can fundamentally alter the decidability of the local decision map synthesis problem itself.

These findings suggest several directions for future research. First, it remains open to characterize structural conditions on fusion rules, such as forms of logical monotonicity, under which decentralized decision-making becomes decidable. Such a characterization may provide a unified view of existing decidable architectures. Second, although the problem is undecidable in general, practical applications often impose additional restrictions, such as bounded memory at local agents. In such cases, local decision maps can be implemented as finite-state machines with bounded complexity, making the synthesis problem decidable via exhaustive enumeration in principle,  despite limited scalability.
Therefore, it is of interest to investigate bounded-synthesis techniques for constructing implementable local decision maps under memory constraints.

\bibliographystyle{plain}
\bibliography{references}

\end{document}